\def\delequal{\mathrel{\ensurestackMath{\stackon[1pt]{=}{\scriptscriptstyle\Delta}}}}
\newtheorem{remark}{Remark}
\DeclarePairedDelimiter\floor{\lfloor}{\rfloor}
\newtheorem{theorem}{Theorem}
\newcommand{\sir}{\mathrm{SIR}}
\newcommand{\Pb}{\mathbb{P}}
\newcommand{\Eb}{\mathbb{E}}
\newcommand{\Lc}{\mathcal{L}}
\newcommand{\black}{\textcolor{black}}
\DeclareMathOperator*{\R}{\mathbb{R}}
\DeclareMathOperator*{\N}{\mathcal{N}}
\DeclareMathOperator*{\U}{\mathcal{U}}
\DeclareMathOperator*{\CN}{\mathcal{CN}}
\newcommand*{\acro}[3][]{\newacronym[#1]{#2}{#2}{#3}}
\begin{document}

\newcommand\blfootnote[1]{%
  \begingroup
  \renewcommand\thefootnote{}\footnote{#1}%
  \addtocounter{footnote}{-1}%
  \endgroup
}


\title{Towards a Connected Sky: Performance of Beamforming with Down-tilted Antennas  for Ground and UAV User Co-existence}
\author{Ramy Amer, Walid Saad, and Nicola Marchetti}


 \author{Ramy Amer,~\IEEEmembership{Student~Member,~IEEE,} Walid~Saad,~\IEEEmembership{Fellow,~IEEE,} 
~and~Nicola~Marchetti,~\IEEEmembership{Senior~Member,~IEEE}
 
\thanks{Ramy Amer and Nicola~Marchetti are with CONNECT Centre for Future Networks, Trinity College Dublin, Ireland. Email:\{ramyr, nicola.marchetti\}@tcd.ie.}
\thanks{Walid Saad is with Wireless@VT, Bradley Department of Electrical and Computer Engineering, Virginia Tech, Blacksburg, VA, USA. Email: walids@vt.edu.}
\thanks{This publication has emanated from research conducted with the financial support of Science Foundation Ireland (SFI) and is co-funded under the European Regional Development Fund under Grant Number 13/RC/2077, and the U.S. National Science Foundation under Grants CNS-1836802 and IIS-1633363.}
} 
\maketitle

\begin{abstract}
Providing connectivity to aerial users (AUs) such as cellular-connected unmanned aerial vehicles (UAVs) is a key challenge for tomorrow's cellular systems. In this paper, the use of conjugate beamforming (CB) for simultaneous content delivery to an AU co-existing with multiple ground users (GUs) is investigated. In particular, a content delivery network of uniformly distributed massive multiple-input multiple-output (MIMO)-enabled ground base stations (BSs) serving both aerial and ground users through spatial multiplexing is considered. For this model, \black{the successful content delivery probability (SCDP) is derived} as a function of the system parameters. The effects of various system parameters such as antenna down-tilt angle, AU's altitude, number of scheduled users, and number of antennas on the achievable performance are then investigated. Results reveal that whenever the AU's altitude is below the BS height, the antennas' down-tilt angles yield an inherent tradeoff between the performance of the AU and the GUs. However, if the AU's altitude exceeds the BS height, down-tilting the BS antennas with a considerably large angle improves the performance of both the AU and the GUs.
 
%
%
%
%
%
%
%
%
%
%
\end{abstract}
\begin{IEEEkeywords}
Cellular-connected UAVs, conjugate beamforming.	
\end{IEEEkeywords}
\vspace{-0.4 cm}
\section{Introduction}	
\black{A tremendous increase in the use of \acp{UAV} in a wide range of applications, ranging from airborne \acp{BS}, delivery of goods, to traffic control, is expected in the foreseeable future \cite{8660516,mozaffari2016unmanned,eldosouky2019drones,TUAVmag2019}. To enable these applications, UAVs must communicate with one another and with ground devices. To enable such communications, it is imperative to connect UAVs, seen as \acp{AU}, to the ubiquitous wireless cellular network. Such cellular-connected UAVs have recently attracted attention in cellular network research in both academia and industry \cite{azari2017coexistence,lin2018sky,85317111,8528463,amer2020caching,comp-meet-uavs} due to their ability to pervasively communicate. However, cellular networks have been designed to provide connectivity to \acp{GU} rather than \acp{AU} \cite{azari2017coexistence}. For instance, cellular-connected UAV communication possesses substantially different characteristics that pose new technical challenges which include: dominance of \ac{LoS} interference and reduced \acp{GBS} antenna gain \cite{azari2017coexistence}.}


In this regard, the authors in \cite{azari2017coexistence} studied the feasibility of supporting drone operations using existing cellular infrastructure. Results revealed that the favorable propagation conditions that \acp{AU} enjoy due to their altitude is also one of their strongest limiting factors since they are susceptible to \ac{LoS} interference. Meanwhile, the authors in \cite{85317111} minimized the UAV's mission completion time by optimizing its trajectory while maintaining reliable communication with the \acp{GBS}. In \cite{8528463}, through system simulations, the authors evaluated the performance of the downlink of \acp{AU} when supported by either a traditional cellular network, or a massive \ac{MIMO}-enabled network \black{with \ac{ZFBF}}. In \cite{amer2020caching}, the authors showed that cooperative transmission significantly improves the coverage probability for high-altitude \acp{AU}. However, while the works in \cite{azari2017coexistence,85317111}, \cite{amer2020caching}, and \cite{comp-meet-uavs} have analyzed the performance of cellular-connected UAVs, their approaches can not be used to effectively improve the performance of AUs while enhancing \ac{SE} by spatial multiplexing. Also, even though the work in \cite{8528463} has proposed MIMO beamforming for an AU co-existing with multiple GUs, this work provides no analytical characterization of the performance of AUs or the impact of the antennas' down-tilt angles.

\black{The main contribution of this paper is a comprehensive analysis of cellular communications with AUs. In particular, we propose a MIMO \textit{\ac{CB}} approach that can improve the performance of cellular communication links for the \acp{AU} and effectively enhance the cellular system \ac{SE}. We consider a network of one \ac{AU} co-existing with multiple \acp{GU} that are being simultaneously served via massive MIMO-enabled \acp{GBS}. We introduce a novel analytical framework that can be leveraged to characterize the performance of the spatially multiplexed \ac{AU} and \acp{GU}. Given the different channel characteristics and the corresponding precoding vectors among \acp{GU} and the \ac{AU}, we first derive the gain of intended and interfering channels for both kind of users. We then analytically characterize the \ac{SCDP} as a function of the system parameters. \emph{To  our best knowledge, this is the first work to perform a rigorous analysis of MIMO \ac{CB} to simultaneously serve aerial and ground users.}}

\vspace{-0.3 cm}
\section{System Model}
Consider a cellular network composed of massive \ac{MIMO}-enabled \black{\acp{BS} $b_i$ distributed according to a homogeneous \ac{PPP} $\Phi$ of intensity $\lambda$, where $\Phi=\{ b_i \in \mathbb{R}^2, \forall i \in \mathbb{N}^+ \}$}. A three-sectored cell is associated with each \ac{BS}, with each sector spanning an angular interval of $120\,^{\circ}$. Each sector has a large antenna array of $M$ antennas  
%
%
each of which has a horizontal constant beamwidth of $120\,^{\circ}$, and vertical beamwidth $\theta_B$. \ac{CB} is employed to simultaneously serve a selected set $\mathcal{K}$ of $K$ users. These $K$ users are viewed as an \ac{AU} that is scheduled with a set $\mathcal{K}_{G}$ of $K-1$  \acp{GU}, as done in \cite{8528463}. This assumption is in line with the fact that the number of current \acp{GU} is much larger than the number of \acp{AU}. \black{We assume that the \acp{GU} are located according to some independent stationary point process.} BSs are deployed at the same height $h_{{\rm BS}}$ while \acp{AU} and \acp{GU} are at altitudes $h_d$ and $h_g$, respectively, where $h_d \gg h_g$. Given the symmetry of the problem, we consider the performance of the typical ground and aerial users located at $(0,0,h_g)$, and $(0,0,h_d)$, respectively. \black{We also refer to the serving \ac{BS} as \emph{tagged \ac{BS}}, which is the nearest BS to the origin $(0,0)\in \R^2$, with $d_{ig}$ and $d_{id}$ being the distances from the \ac{GBS} to the typical GU and AU, respectively.}
%
%
%


For \acp{GU}, we consider \ac{i.i.d.} quasi-static Rayleigh fading channels. The channel vector between the $M$ antennas of tagged BS $i$ and \ac{GU} $k$ is $\sqrt{\beta_{ik}}\boldsymbol{h}_{ik}$, where $\boldsymbol{h}_{ik} \sim \CN(\boldsymbol{0},\sigma^2\boldsymbol{I}_M)$ for $k \in \mathcal{K}_{G}$. $\sigma^2$ is the channel variance between each single antenna and user $k$, and $\boldsymbol{I}_M$ is the $M\times M$ identity matrix. \black{$\beta_{ik}=d_{ig}^{-\alpha}$ defines the large-scale channel fading}. We also assume that the \ac{GU} channels are dominated by \ac{NLoS} transmission.  
For the \ac{AU}, we assume a wireless channel that is characterized by both large-scale and small-scale fading. For the large-scale fading, the channel between \ac{BS} $i$ and the \ac{AU} includes \ac{LoS} and \ac{NLoS} components, which are considered separately along with \black{their probabilities of occurrence \cite{8713514}}. 
For small-scale fading, we adopt a Nakagami-$m_v$ model for the channel between each  single antenna and the \ac{AU}, \black{as done in \cite{amer2020caching,8713514,7967745},} with the following \ac{PDF}:
\begin{align}
\label{Naka}
f_{\Omega_v}(\omega,\eta) = \frac{2(\frac{m_v}{\eta})^{m_v} \omega^{2m_v-1}}{\Gamma(m_v)} {\rm exp}\big(-\frac{m_v}{\eta}\omega^2\big), 
\end{align}
where $v\in \{l,n \}$, $m_l$ and $m_n$ are the fading parameters for the \ac{LoS} and NLoS links, respectively, with $m_l>m_n$, and $\eta$ is a controlling spread parameter. When $m_v=\eta = 1$, Rayleigh fading is recovered with an exponentially distributed instantaneous power, which is the case for \acp{GU} or \acp{AU} with no \ac{LoS} communication. For Nakagami channels, we assume that the phase $\theta_{ng}$ is uniformly distributed in $[0,2\pi]$, i.e., $\theta_{ng} \sim \U(0,2\pi)$.	
Given that $\omega \sim $ Nakagami$(m_v,\eta)$, it directly follows that the channel gain $\omega^2 \sim \Gamma(m_v,\frac{\eta}{m_v})$. 
%
%

3D blockage is characterized by the fraction $a$ of the total land area occupied by buildings, the mean number of buildings being $\nu$  per \SI{}{km}$^2$, and the buildings' height modeled by a Rayleigh \ac{PDF} with a scale parameter $c$. Hence, the probability of  \ac{LoS} when served from \ac{BS} $i$, at a horizontal-distance $r_{i}$ from the typical \ac{AU}, is given as \cite{azari2017coexistence}:
\begin{align}
\label{los-eqn}
\Pb_{l}(r_i) = \prod_{n=0}^{p}\Big[1 - {\rm exp}\big(- \frac{\big(h_{\textrm{BS}} + \frac{h(n+0.5)}{p+1}\big)^2}{2c^2}\big) \Big], 
\end{align}
where $h=h_d - h_{\textrm{BS}}$ and $p=\floor{\frac{r_i\sqrt{a\nu}}{1000}-1}$. In our model, we assume that the AUs are deployed in an urban environments, where $a$ and $\nu$ take relatively large values. \black{Therefore, the large-scale channel fading for the \ac{AU} is given by $d_{id}^{-\alpha_v}$, where $v\in \{l,n \}$, $\alpha_{l}$ and $\alpha_{n}$ are the path loss exponents for \ac{LoS} and NLoS links, respectively, with $\alpha_{l}<\alpha_{n}$.}

\black{For a general user $k\in \mathcal{K}$ at altitude $h_k \in \{h_d,h_g\}$, the antenna directivity gain can be written similar to \cite{azari2017coexistence} as $G(r_i) =G_m$, for $r_i \in \mathcal{S}_{bs}$, and $G_s$, for $r_i \notin \mathcal{S}_{bs}$, where $r_i$ is the horizontal-distance to the BS, $\mathcal{S}_{bs}$ is formed by all the distances $r_{i}$ satisfying $h_{\textrm{\ac{BS}}} - r_{i} {\rm tan}(\theta_t + \frac{\theta_B}{2}) < h_k < h_{\textrm{\ac{BS}}} - r_{i} {\rm tan}(\theta_t - \frac{\theta_B}{2})$, and $\theta_t$ and $\theta_B$ denote respectively the antenna down-tilt and beamwidth angles.}
%
Therefore, the antenna gain plus path loss will be	
\begin{align}
\zeta_v(r_i) = A_v G(r_i) d_i^{-\alpha_v}  = A_v G(r_i) \big(r_i^2 + (h_k-h_{\textrm{\ac{BS}}})^2\big)^{-\alpha_v/2},
\nonumber 
\end{align}
\black{where $d_i \in \{d_{ig},d_{id}\}$, $v\in \{l,n \}$, and $A_{l}$ and $A_{n}$ are the path loss constants at a reference distance $d_i = \SI{1}{m}$ for \ac{LoS} and \ac{NLoS}, respectively.} \black{For the typical \ac{GU}, $d_i=d_{ig}$, $h_k=h_g$ and, by \ac{NLoS} assumption, $v=n$.}
\black{Note that, since one \ac{AU} is simultaneously scheduled with $K-1$ \acp{GU}, the $K$ scheduled users encounter independent  small-scale fading. Also, for the $K-1$ \acp{GU}, the small-scale fading is \ac{i.i.d.} Moreover, for the AU, the impact of the channel spatial correlation can be significantly reduced using effective MIMO antenna design techniques, e.g., using antenna arrays whose elements have orthogonal polarizations or patterns \cite{bhagavatula2008performance}. Therefore, for analytical tractability, we ignore such spatial correlation.}


Next, we introduce our proposed \ac{CB} framework to spatially multiplex one \ac{AU} and $K-1$ \acp{GU}. We develop a novel mathematical framework that can be leveraged to characterize the performance of aerial and ground users. This, in turn, allows us to quantify the impact of different system parameters, on the performance of  AUs and GUs. 

\vspace{-0.25 cm}
\section{Content Delivery Analysis}		
We assume that perfect \ac{CSI} is available at the tagged \ac{BS}. Linear precoding in terms of \ac{CB} creates a $K\times1$ transmission vector $\boldsymbol{s}'$ for $M$ antennas by multiplying the original data vector $\boldsymbol{s}$ by a precoding matrix $\boldsymbol{W}$:  $\boldsymbol{s}'=\boldsymbol{W}\cdot \boldsymbol{s}$, where $[\boldsymbol{W}]_{M\times K}$ consists of the beamforming weights. 
\black{Let $\boldsymbol{H}$ be the $M\times K$ channel matrix between $M$ antennas of the tagged \ac{BS} $i$ and its $K$ scheduled users, written as $\boldsymbol{H}_i = \Big[ \boldsymbol{h}_{i1} \dots \boldsymbol{h}_{ik} \dots \boldsymbol{h}_{iK} \Big]$, 
 where $\boldsymbol{H}_{i} \in \mathbb{C}^{M\times K}$, and $\boldsymbol{h}_{ik} \in \mathbb{C}^{M\times1}$.} 
For \ac{CB}, tagged \ac{BS} $i$ creates a precoding matrix $\boldsymbol{W}_i = \Big[ \boldsymbol{w}_{i1} \dots \boldsymbol{w}_{ik} \dots \boldsymbol{w}_{iK} \Big]$, with $\boldsymbol{w}_{ik} = \frac{\boldsymbol{h}_{ik}^{H}}{\lVert\boldsymbol{h}_{ik}\rVert}$, where each beam is normalized to ensure equal power assignment \cite{6823643}.		
Moreover, let $\boldsymbol{f}_{jk}$ be the interfering channel between interfering BS $j$ and typical user $k$. Neglecting thermal noise, the received signal at scheduled user $k$, denoted as $y_{ik}$, is given by
\begin{align}			
\label{rec-pwr} 
P(r_i) \boldsymbol{h}_{ik} \boldsymbol{w}_{ik} 
 s_{ik} + \sum_{\kappa\in \mathcal{K}_{G}} P(r_i) \boldsymbol{h}_{ik} \boldsymbol{w}_{i\kappa} s_{i\kappa} +
 \sum_{j\in\Phi^o } \sum_{l=1}^{K} P(u_j) \boldsymbol{f}_{jk} \boldsymbol{w}_{jl} s_{jl},  
 \nonumber 
\end{align}				
where $\Phi^o=\Phi\setminus{\{i\}}$. \black{The first term in the above equation represents the desired signal from tagged \ac{BS} $i$ with $P(r_i) =\sqrt{\frac{P_t}{K}}  \zeta_v(r_i)^{0.5}$ representing the received power and $P_t$ denoting the BS transmission power. The second and third terms represent the intra- and inter-cell interference, denotes as $I_{\text{in}}$ and $I_{\text{out}}$, respectively.} The information signal intended for user $k$ is denoted by a complex scalar $s_{ik}$ with unit average power, i.e., $\Eb[|s_{ik}|^2]=1$. 

Since we assume both \ac{LoS} and \ac{NLoS} communications for the \ac{AU}, with corresponding small-scale fading, we need to distinguish between the two communication paradigms. For the \ac{NLoS} case, all the $K$ users experience Rayleigh small-scale fading. For \ac{LoS} communication, however, only the \ac{AU} experiences Nakagami-$m_l$ small-scale fading, where $m_l>1$. We hence start by characterizing the gain of intended and interfering channels in Table \ref{ch5:table1}.

The second and third columns in Table \ref{ch5:table1} list the marginal channel distributions, i.e., the channel from each single antenna to its intended receiver. We also use interfering BSs to refer to either intra- or inter-cell BS. The first row in Table \ref{ch5:table1} represents the intended channel gain for \acp{GU}. It is shown that the equivalent channel gain from tagged BS to its associated GU follows $\Gamma(M,\sigma^2)$ \cite{6823643}. Similarly, the second row represents the intended channel gain for the \ac{AU},  which is the sum of $M$ independent \acp{RV}, each of which follows $\Gamma(m_v,\frac{\eta}{m_v})$. Hence, its intended channel gain follows $\Gamma(m_vM,\frac{\eta}{m_v})$. 
%
%
The third row stands for the interference power caused by transmission of a single beam from an interfering BS to its associated \ac{GU} when seen by the typical \ac{GU}, which follows $\Gamma(1,\sigma^2)$ \cite{6823643}.  The fourth (fifth) row describes cases in which a single beam from an interfering BS to its associated \ac{GU} (AU) is transmitted and seen by the typical \ac{AU} (GU). \black{Similarly, the sixth row describes cases in which a single beam from an interfering BS to its associated \ac{AU} is transmitted and seen by the typical \ac{AU}}. 
Next, we derive the channel gain for the fourth case, whereas the fifth and sixth cases follow in the same way and are omitted due to space limitations.
\begin{table}
\vspace{-0.5 cm}
\caption{Channel gains for intended and interfering links.}
\vspace{-0.2 cm}
\begin{center}
\scalebox{0.8}{
  \begin{tabular}{ |p{.15cm}|p{1.9cm}|p{2.1cm}|p{0.4cm}|p{0.9cm}|p{1.5cm}|}  
    \hline
\textbf{No} & \textbf{Precoding for channel}&\textbf{Traverse through channel}& \textbf{Seen by}& \textbf{Intended}& \textbf{Channel gain}\\ \hline
1&$\CN(0,\frac{\sigma^2}{2})$&$\CN(0,\frac{\sigma^2}{2})$   &GU &Yes & $\Gamma(M,\sigma^2)$\\ \hline
2&Nakagami$(m_v,\eta)$&Nakagami$(m_v,\eta)$    & AU &Yes    & $\Gamma(m_vM,\frac{\eta}{m_v})$ \\ \hline 
3&$\CN(0,\frac{\sigma^2}{2})$&$\CN(0,\frac{\sigma^2}{2})$&GU &No& $\Gamma(1,\sigma^2)$\\ \hline	
4&$\CN(0,\frac{\sigma^2}{2})$&Nakagami$(m_v,\eta)$& AU &No & $\Gamma(1,\eta)$\\ \hline		 
5&Nakagami$(m_v,\eta)$&$\CN(0,\frac{\sigma^2}{2})$  &GU &No  & $\Gamma(1,\sigma^2)$\\ \hline		 
6&Nakagami$(m_v,\eta)$&Nakagami$(m_v,\eta)$    & AU &No & $\Gamma(1,\eta)$ \\ 
\hline
\end{tabular}}
\label{ch5:table1}
\end{center}
			\vspace{-0.50 cm} 			
\end{table}

\begin{theorem}
\label{ch5:theorem1}
\black{Under the massive MIMO assumption, whenever a single beam from an interfering BS is received by the typical AU then, the interference channel gain will be given by $\Gamma(1,\eta)$.}
\end{theorem}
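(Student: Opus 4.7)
The plan is to express the single-beam interference as an inner product between the interferer's normalized precoding vector and the (independent) channel from that interferer to the typical \ac{AU}, and then apply a central-limit argument to identify its limiting distribution. Specifically, let $\boldsymbol{w}_{jl} = \boldsymbol{h}_{jl}^H/\lVert\boldsymbol{h}_{jl}\rVert$ be the unit-norm \ac{CB} vector used by interfering \ac{BS} $j$ for its scheduled user $l$, and let $\boldsymbol{f}_{jk}$ denote the (independent) $1\times M$ channel from that \ac{BS} to the typical \ac{AU}, with entries whose magnitudes are Nakagami-$m_v$ and whose phases are uniform on $[0,2\pi]$. The single-beam interference gain is $|\boldsymbol{f}_{jk}\boldsymbol{w}_{jl}|^{2}$. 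Since $\boldsymbol{h}_{jl}$ and $\boldsymbol{f}_{jk}$ are independent, I would condition on $\boldsymbol{w}_{jl}=\boldsymbol{u}$ and treat $\boldsymbol{f}_{jk}\boldsymbol{u} = \sum_{m=1}^{M} f_{jk,m}\,u_m$ as a weighted sum of independent complex random variables with deterministic weights $u_m$.

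Next, I would exploit the phase symmetry of the Nakagami entries: because $\theta_{ng}\sim\U(0,2\pi)$ is independent of the magnitude and closed under rotation, each product $f_{jk,m}u_m$ has the same distribution as $|f_{jk,m}|\,|u_m|$ multiplied by an independent uniform phase. Hence each summand has zero mean and variance $|u_m|^{2}\,\Eb[|f_{jk,m}|^{2}] = |u_m|^{2}\eta$, giving a total conditional variance $\sum_{m=1}^{M}|u_m|^{2}\eta = \eta$ regardless of $\boldsymbol{u}$. Under the massive MIMO assumption, $M$ is large and the normalized weights $|u_m|^{2} = |h_{jl,m}|^{2}/\lVert\boldsymbol{h}_{jl}\rVert^{2}$ are of order $1/M$ almost surely, so the Lindeberg condition is met and the Lindeberg--Feller \ac{CLT} yields $\boldsymbol{f}_{jk}\boldsymbol{u}\to\CN(0,\eta)$ in distribution.

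Because the limiting law is $\CN(0,\eta)$ and does not depend on $\boldsymbol{u}$, the conditioning can be removed. Taking squared magnitudes, the interference channel gain is therefore exponentially distributed with mean $\eta$, i.e.\ $\Gamma(1,\eta)$, which is the claim. The identical argument covers the sixth case of Table \ref{ch5:table1}, in which the precoder is formed from a Nakagami channel rather than a Rayleigh one, since only the phase-uniformity and second-moment properties of the entries of $\boldsymbol{f}_{jk}$ enter the calculation. The step I expect to be the most delicate is verifying the Lindeberg condition for the triangular array in a way that remains valid almost surely over the randomness of $\boldsymbol{w}_{jl}$; this reduces to the standard fact that the self-normalized components $|h_{jl,m}|^{2}/\sum_{m}|h_{jl,m}|^{2}$ vanish uniformly as $M\to\infty$, a consequence of the strong law applied to $\lVert\boldsymbol{h}_{jl}\rVert^{2}/M\to\sigma^{2}$.
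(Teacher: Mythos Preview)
Your argument is correct and, like the paper's, rests on a CLT, but the decomposition is genuinely different. The paper does \emph{not} condition on the precoding direction; it keeps the numerator $\sum_{o=1}^{M} X_o Y_o$ (with $X_o$ the complex Gaussian precoding-channel entries and $Y_o$ the Nakagami traversing-channel entries) and the denominator $\lVert\boldsymbol{h}_{j\kappa}\rVert$ as two separate random objects. It applies the ordinary i.i.d.\ CLT to the real and imaginary parts of the numerator, computing the per-term variance $\sigma^{2}\eta/2$ explicitly from the Nakagami first and second moments, and then uses a Stirling approximation to argue that the Nakagami-$M$ denominator concentrates at a deterministic value of order $\sqrt{M}$, so that the ratio is $\CN(0,\eta/2)$ in each of its real and imaginary parts. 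Your route---conditioning on the unit-norm precoder $\boldsymbol{u}$---is cleaner in that the conditional variance is \emph{exactly} $\eta$ for every $M$ and no separate concentration step for the norm is required; the cost is that the summands are no longer identically distributed, so you must appeal to Lindeberg--Feller and verify (as you correctly identify) that $\max_m|u_m|^{2}\to 0$ almost surely via the strong law on $\lVert\boldsymbol{h}_{jl}\rVert^{2}/M$. The paper's route retains the i.i.d.\ structure throughout but pays for it with an additional, somewhat heuristic, approximation on the denominator. Your remark that the same argument handles the sixth row of Table~\ref{ch5:table1} is also valid, and in fact slightly more transparent than in the paper, since only the uniform phase and the second moment $\eta$ of the Nakagami entries enter your variance calculation.
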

\begin{proof}
We write the interfering channel coefficient as
\begin{align}
h_j &= \boldsymbol{w}_{j\kappa} \boldsymbol{f}_{jk} =  \frac{\boldsymbol{h}_{j\kappa}^{H} \boldsymbol{f}_{jk} }{\big\lVert\boldsymbol{h}_{j\kappa}\big\rVert} 
\delequal \frac{\sum_{o=1}^{M} X_o \times Y_o}{\sqrt{\sum_{q=1}^{M} Z_q}}
 \\
&\overset{(a)}{\delequal} \frac{\sum_{o=1}^{M} X_o \times Y_o}{\sqrt{W}}
\overset{(b)}{\delequal} \frac{\sum_{o=1}^{M} X_o \times Y_o}{Q},
\label{interfer_pwr}
\end{align}
where $X_o\sim \CN(0,\frac{\sigma^2}{2})$, $Y_o \sim {\rm Nakagami}(m_v,\eta)$, $Z_q \sim {\rm exp}(\frac{1}{\sigma^2})$, $W\sim\Gamma(M,\frac{1}{\sigma^2})$, and $Q\sim{\rm Nakagami}(M,\frac{M}{\sigma^2})$; (a) follows since $W$ is a sum of $M$ \ac{i.i.d.} exponential \acp{RV}, hence it follows $\Gamma(M,\frac{1}{\sigma^2})$. (b) follows since $Q$ equals the square root of the \ac{RV} $W\sim\Gamma(M,\frac{1}{\sigma^2})$, hence $Q$ follows ${\rm Nakagami}(M,\frac{M}{\sigma^2})$.  
Denoting the numerator of $h_j$ as $z$, and writing $z$ as sum of real and imaginary \acp{RV}:
\begin{align}
 \Re(z) &\delequal \sum_{o=1}^{M}  \Big(  \underbrace{X_o {\rm cos}(\theta_{ng_o})- X_o {\rm sin}(\theta_{ng_o}) }_{\text{RV}\#1}\Big)	
 \cdot \underbrace{Y_o}_{\text{RV}\#2} , 	
\end{align}		
where, by assumption, $\theta_{ng_o} \sim \U(0,2\pi)$. We hence have a sum of $M$ \ac{i.i.d.} \acp{RV}, each of which is the product of two independent \acp{RV} whose means and variances are $\{\mu_1,\mu_2\}$ and $\{\sigma_1^2,\sigma_2^2\}$, respectively. It can easily be shown that $\mu_1=0$ and $\sigma_1^2=\frac{\sigma^2}{2}$. \black{For large $M$,} using the \ac{CLT}, we approximate the \ac{PDF} of $\Re(z)$ to $\N(\mu_{12},\sigma_{12}^2)$, whose mean and variance are respectively $\mu_{12} \overset{}{=}  \mu_1 \mu_2 = 0$, and $\sigma_{12}^2$
\begin{align}
&\overset{}{=} \sigma_1^2 \sigma_2^2 + \sigma_1^2 \mu_2^2 + \mu_1^2 \sigma_2^2 
\nonumber \\
& \overset{(a)}{=}\frac{\sigma^2}{2} \eta \Big( 1 - \frac{1}{m_v}\Big(\frac{\Gamma(m_v+0.5)}{\Gamma(m_v)}\Big)^2\Big)  +
\frac{\sigma^2}{2} \Big( \frac{\Gamma(m_v+0.5)}{\Gamma(m_v)} \Big(\frac{\eta}{m_v}\Big)^{0.5}\Big)^2
\nonumber \\
& \overset{}{=} \frac{\sigma^2\eta}{2} - \frac{\sigma^2\eta}{2m_v} \Big(\frac{\Gamma(m_v+0.5)}{\Gamma(m_v)}\Big)^2  + \frac{\sigma^2\eta}{2 m_v} \Big(\frac{\Gamma(m_v+0.5)}{\Gamma(m_v)}\Big)^2
 \overset{}{=}  \frac{\sigma^2\eta}{2},
\end{align}
where (a) follows from the mean and variance formulas for Nakagami$(m_v,\eta)$. For the dominator of (\ref{interfer_pwr}), we use the Stirling approximation  to approximate the \ac{PDF} of $Q$ by	
\begin{align}
f_{\Omega}(\omega,M,M/\sigma^2) = \frac{1}{\omega}\big(\frac{ \omega^2}{\frac{M}{\sigma^2} e^{\frac{\omega^2}{M/\sigma^2}-1}}\big)^M. 
\end{align}
The fraction raised to the $M$-th power is smaller than one, and the integral is one (since it is a \ac{PDF}). In fact, the factor raised to the $M$-th power is one only when $\omega=\frac{\sqrt{M}}{\sigma}$. \black{Hence, for large $M$,}  from the \ac{CLT}, $\Re(h_j) \sim \frac{\N\big(0,\frac{\sigma^2\eta}{2M} \big)}{\frac{\sigma}{\sqrt{M}}} \delequal  \N(0,\frac{\eta}{2} )$. Similarly, $\Im(h_j) \sim \N(0,\frac{\eta}{2} )$. Hence, the channel gain $|h_j|^2 = \big(\sqrt{\Re{h_j}^2 + \Im{h_j}^2}\big)^2 \sim \Gamma(1,\eta)$. This completes the proof.
\end{proof}

%
 %
%
%
%
%
Next, we derive the \ac{SCDP} for the \ac{AU}, which is defined as the probability of obtaining a requested content with \ac{SIR} higher than a target \ac{SIR} $\vartheta$. This is an important performance metric that is widely studied in content delivery and caching networks \cite{8412262} and \cite{chaccour2019reliability}. The same methodology can be applied to obtain that for \acp{GU}, but the details are omitted due to space constraints. 
%
%
%
%
We next index the \ac{AU} as $k=1$. Let $h_{iK}=\sum_{\kappa\in{\mathcal{K}_{G}}} |\boldsymbol{w}_{i\kappa} \boldsymbol{f}_{i1}|^2$ denote the intra-cell interference power. From Theorem \ref{ch5:theorem1}, $|\boldsymbol{w}_{i\kappa} \boldsymbol{f}_{i1}|^2\sim\Gamma(1,\eta)$. Neglecting the spatial correlation, we have $h_{iK}$ representing  sum of $K-1$ Gamma \acp{RV}, which yields $h_{iK} \sim  \Gamma(K-1,\eta)$. Similarly, the inter-cell interference power $h_{jK}=\sum_{l=1}^{K}|\boldsymbol{w}_{il} \boldsymbol{f}_{j1}|^2\sim\Gamma(K,\eta)$. \black{Finally, according to the void probability of PPPs \cite{haenggi2012stochastic}, the \ac{PDF} of the horizontal-distance $r$ to the \textit{tagged BS} is $f_{R}(r)=2\pi\lambda r e^{-\pi\lambda r^2}$.}
\begin{theorem}
\label{theorem2} 
The \black{unconditional} \ac{SCDP} for the \ac{AU} is given by 
\begin{align} 
\label{cov-prob}
\Pb_{{\rm c}} &= \Pb\big(\sir > \vartheta \big)
= \int_{r=0}^{\infty} \Big[ \Pb_{{\rm c}|r}^l \Pb_{l}(r) + \Pb_{{\rm c}|r}^n \Pb_{n}(r)
\Big] f_{R}(r) \dd{r},
\end{align} 
where $\Pb_{{\rm c}|r}^v = \lVert e^{\boldsymbol{T}_{M_v}}\rVert_1$, 
 $\lVert.\rVert_1$ denotes the induced $\ell_1$ norm, and $\boldsymbol{T}_{M_v}$ is the lower triangular Toeplitz matrix of size $M_v\times M_v$: 
\[ \boldsymbol{T}_{M_v} =
\begin{bmatrix}
    t_{0}       \\
    t_{1}       & t_{0}  \\
    \vdots &  \vdots & \ddots  \\
    t_{M_v-1}       & \dots & t_{1}  & t_{0} 
\end{bmatrix};
\]
where  $M_v=Mm_v$, and \black{the non-zero entries for row $i$ and column $j$ are $t_{i-j} =\frac{(-s_v)^{i-j}}{(i-j)!} \varpi^{(i-j)}(s_v)$}; $s_v = \frac{\vartheta K m_v}{\eta P_t \zeta_v(r)}$,  
$\varpi(s_v)=- (K-1) {\rm log}(1 + s_v \eta P_v(r)^2) -  2\pi  \lambda \int_{\nu=r}^{\infty} \big(1 -
\Pb_{l}(\nu)\delta_l(\nu,s_v) - \Pb_{n}(\nu)\delta_n(\nu,s_v) \big)\nu\dd{\nu}$, 
and $\varpi^{(k)(s_v)}= \frac{d^k}{d s_v^k} \varpi(s_v)$; $\delta_l(\nu,s_v)=\big(1 + s_v \eta P_l(\nu)^2\big)^{-K}$, and $\delta_n(\nu,s_v)=\big(1 + s_v \eta P_n(\nu)^2 \big)^{-K}$.
\end{theorem}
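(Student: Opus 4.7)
The plan is to first condition on both the horizontal distance $r$ to the tagged BS and on the binary LoS/NLoS state of the serving link. The outer integral then follows from the nearest-neighbour PDF $f_R(r)=2\pi\lambda r e^{-\pi\lambda r^2}$ derived via the void probability of the PPP, and the inner weighting by $\Pb_l(r)$ and $\Pb_n(r)$ averages over the link state, reducing the task to computing the conditional SCDP $\Pb_{c|r}^v$ for each $v\in\{l,n\}$.

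Next I would express $\Pb_{c|r}^v$ through the tractable CCDF of a Gamma random variable with integer shape. By Table \ref{ch5:table1}, the desired channel gain satisfies $G_v\sim\Gamma(Mm_v,\eta/m_v)$, so the event $\sir>\vartheta$ is equivalent to $G_v$ exceeding a threshold that is linear in the aggregate interference $I=I_{\text{in}}+I_{\text{out}}$. Applying the Erlang CCDF identity $\Pb(G_v>x)=e^{-(m_v/\eta)x}\sum_{k=0}^{M_v-1}\frac{((m_v/\eta)x)^k}{k!}$, de-conditioning on $I$, and rewriting the moment term via the identity $(-s)^k\frac{d^k}{ds^k}e^{-sI}=(sI)^k e^{-sI}$ yields $\Pb_{c|r}^v=\sum_{k=0}^{M_v-1}\frac{(-s_v)^k}{k!}\mathcal{L}_I^{(k)}(s_v)$, with $s_v=\vartheta K m_v/(\eta P_t\zeta_v(r))$ exactly as in the theorem.

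Then I would factor $\mathcal{L}_I=\mathcal{L}_{I_{\text{in}}}\mathcal{L}_{I_{\text{out}}}$ by independence of the intra- and inter-cell contributions. From the preamble to the theorem, $h_{iK}\sim\Gamma(K-1,\eta)$ immediately gives the closed form $\mathcal{L}_{I_{\text{in}}}(s_v)=(1+s_v\eta P_v(r)^2)^{-(K-1)}$. For the inter-cell component I would apply the PGFL of the PPP restricted to $\nu>r$ (as dictated by the nearest-BS rule), using $h_{jK}\sim\Gamma(K,\eta)$ and averaging the per-BS Laplace contribution over the LoS/NLoS state of the interferer with weights $\Pb_l(\nu)$ and $\Pb_n(\nu)$; this produces the $\delta_l,\delta_n$ terms inside the exponential. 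Taking the logarithm of $\mathcal{L}_I(s_v)$ then reproduces precisely the function $\varpi(s_v)$ in the theorem statement.

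The main obstacle is converting the finite derivative sum $\sum_{k=0}^{M_v-1}\frac{(-s_v)^k}{k!}\mathcal{L}_I^{(k)}(s_v)$ into the compact matrix-exponential form $\lVert e^{\boldsymbol{T}_{M_v}}\rVert_1$. The key step is to write $\mathcal{L}_I(s)=e^{\varpi(s)}$ and invoke Fa\`a di Bruno's formula, so that each derivative $\mathcal{L}_I^{(k)}(s_v)$ becomes a Bell-polynomial combination of the scaled derivatives $t_{i-j}=\frac{(-s_v)^{i-j}}{(i-j)!}\varpi^{(i-j)}(s_v)$. The lower-triangular Toeplitz structure ensures that the powers $\boldsymbol{T}_{M_v}^n/n!$ reproduce exactly the partition sums arising from Fa\`a di Bruno, so that $e^{\boldsymbol{T}_{M_v}}$ is itself lower-triangular Toeplitz and its induced $\ell_1$ norm aggregates the truncated derivative series. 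Substituting this compact expression back into the inner integrand and integrating against $f_R(r)$ yields the claimed formula.
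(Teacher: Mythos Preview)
Your proposal is correct and follows essentially the same route as the paper: condition on $r$ and the LoS/NLoS state, use the Erlang CCDF of the $\Gamma(M m_v,\eta/m_v)$ desired gain to reduce $\Pb_{c|r}^v$ to the truncated derivative series of $\mathcal{L}_I$, factor $\mathcal{L}_I=\mathcal{L}_{I_{\text{in}}}\mathcal{L}_{I_{\text{out}}}$ with $h_{iK}\sim\Gamma(K-1,\eta)$ and the PGFL for $h_{jK}\sim\Gamma(K,\eta)$ averaged over interferer LoS/NLoS, and write $\mathcal{L}_I=e^{\varpi}$. The only cosmetic difference is that the paper obtains the Toeplitz matrix-exponential form by citing a recursive identity $p_i=\sum_{l=0}^{i-1}\tfrac{i-l}{i}p_l\,t_{i-l}$ from~\cite{8490204} rather than invoking Fa\`a di Bruno and Bell polynomials directly; the two arguments are equivalent.
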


\begin{figure}[!t]
    \vspace{-0.4cm}
    \centering
    \subfigure[Number of antennas $M=4$]
    {
        \includegraphics[width=1.6in]{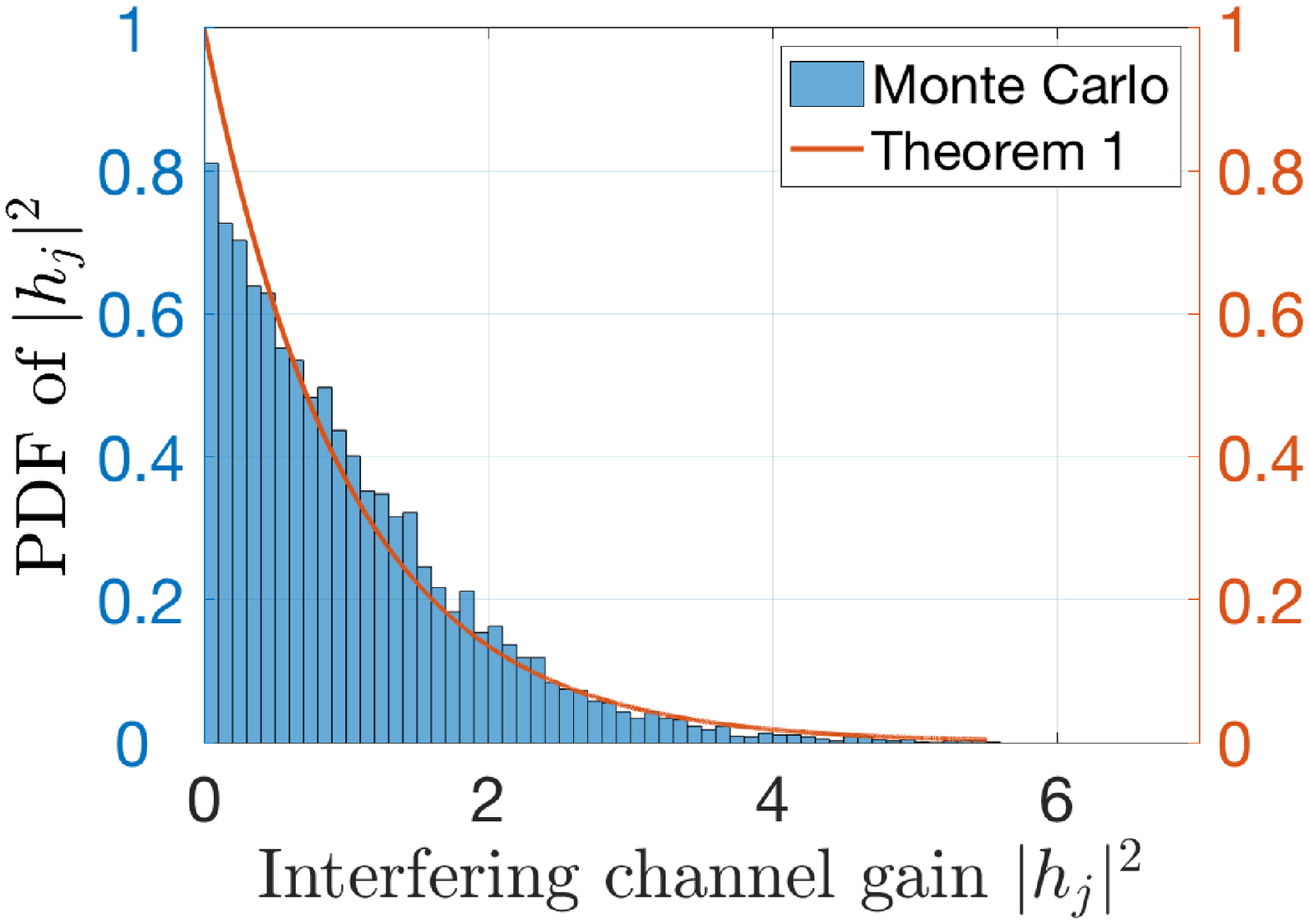}		
        \label{pwr1}
    }
    \subfigure[Number of antennas $M=32$]
    {
        \includegraphics[width=1.6in]{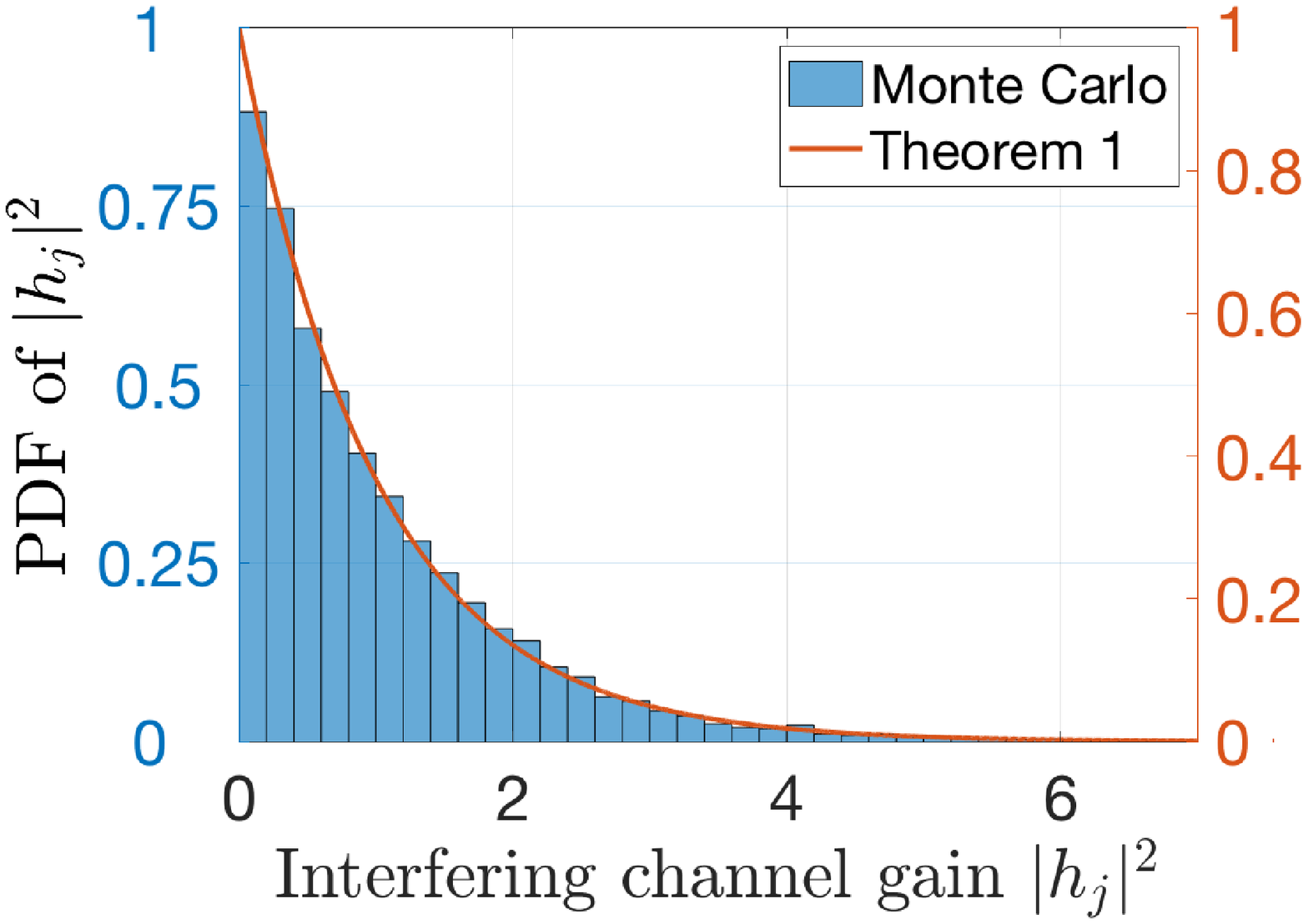}
        \label{pwr2}
    }
    \caption{ \ac{PDF} of the interfering channel power.}		
    \label{inter-pwr}
    \vspace{-0.4cm}
\end{figure}
\begin{figure}[!t]	
\vspace{-0.4cm}
    \centering
    \subfigure[\black{AU altitude $h_d=\SI{90}{m}$}]
    {
        \includegraphics[width=1.6in]{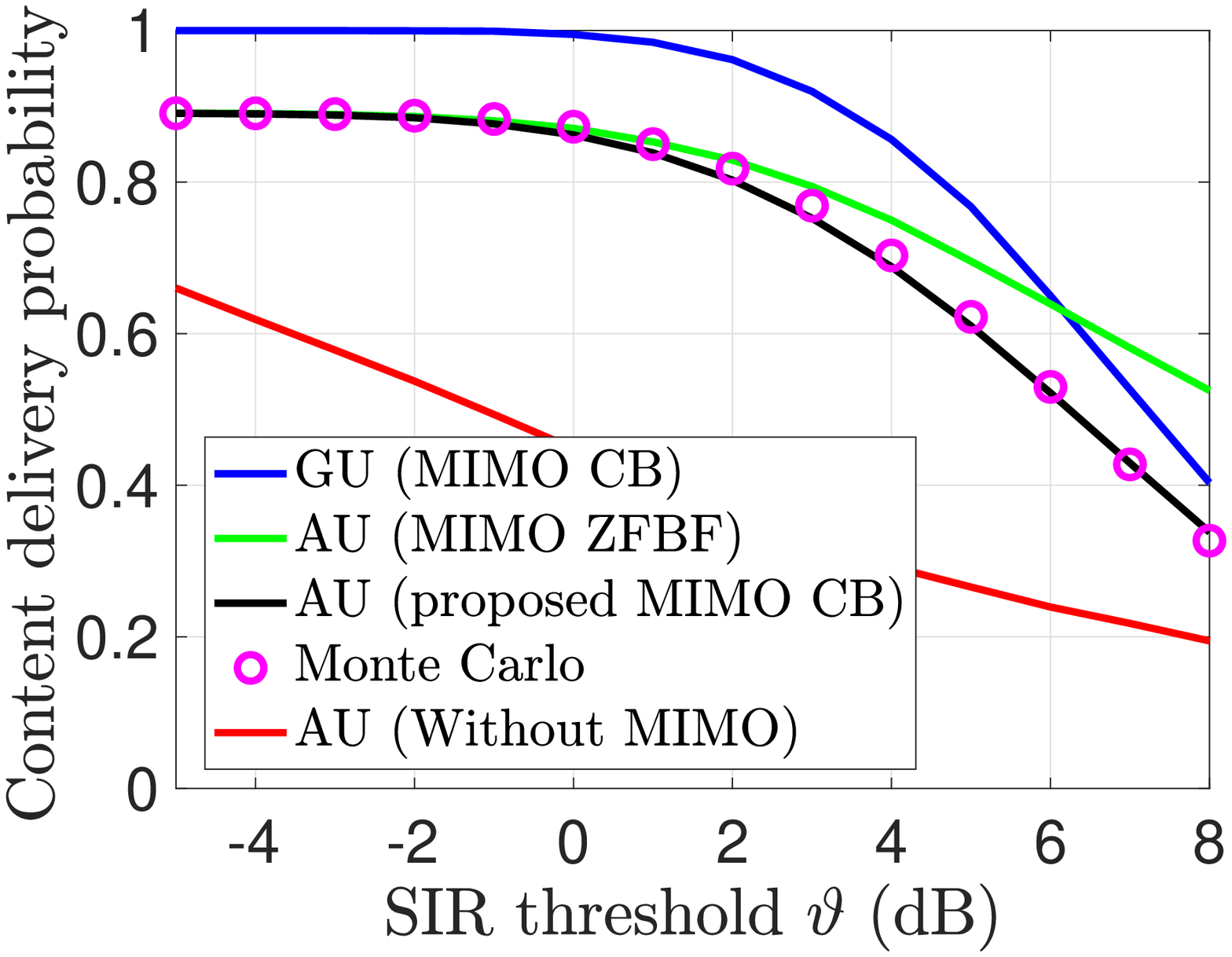}		
        \label{cov_prob_vs_theta}
    }
    \subfigure[\black{$\vartheta=\SI{5}{dB}$, $\lambda=\SI{50}{km^{-2}}$}]
    {
        \includegraphics[width=1.6in]{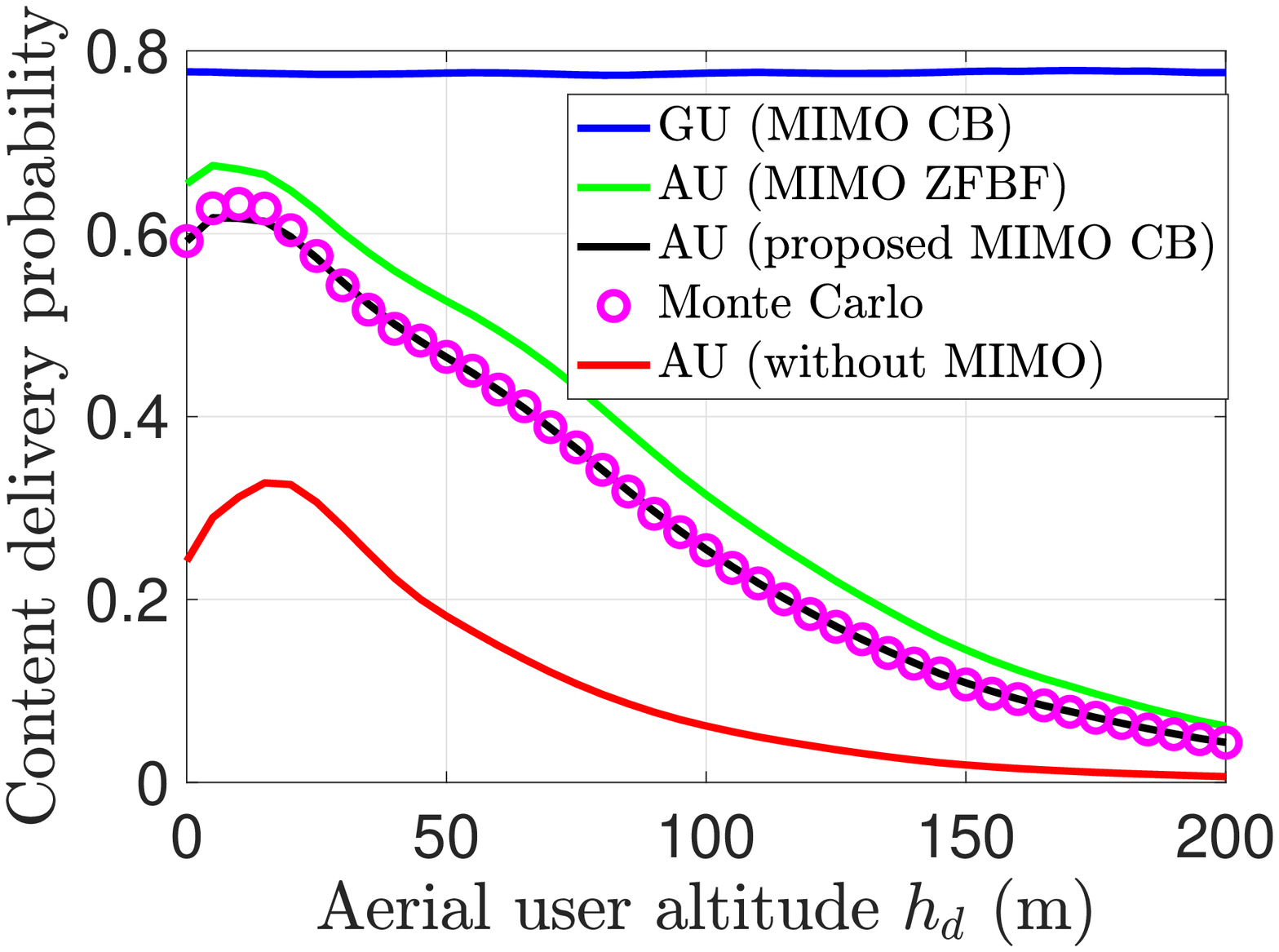}	
        \label{cov_prob_vs_hd}
    }
    \caption{Effect of \ac{SIR} threshold and \ac{AU} altitude (\black{$h_{\rm BS}=\SI{30}{m}$}).}		
    \label{cov_prob_vs_theta_hd}
    \vspace{+0.2cm}
\end{figure}
\begin{proof}
Please see Appendix A.
\end{proof}
\begin{remark}
\black{\rm The main merit of this representation, i.e., $\Pb_{{\rm c}|r}^v = \lVert e^{\boldsymbol{T}_{M_v}}\rVert_1$, is that it leads to valuable system insights. For example, the impact of the shape parameter $M_v=Mm_v$ on the intended channel gain $h_{iK}\sim \Gamma(Mm_v,\eta/m_v)$, which is typically related to the antenna size and the Nakagami fading parameter $m_v$, is clearly illustrated by this finite sum representation (\ref{prob-y2}). Although it is not tractable to obtain closed-form expressions for $t_{k}$ (the entries populating $\boldsymbol{T}_{M_v}$), special cases of interest, e.g., \ac{LoS} or \ac{NLoS} communications, can lead to closed-form expressions, following \cite{8490204}.}
%
%
%
\end{remark}

\begin{remark}
\black{\rm When $K=1$, only the AU is scheduled, i.e., maximal ratio transmission (MRT) beamforming. For MRT, the interfering channel gain is $\Gamma(1,\eta)$. Interestingly, this interfering channel gain is reduced as opposed to the typical Nakagami channel gain $\Gamma\big(m_l,\frac{\eta}{m_l}\big)$ when there is neither precoding nor MIMO transmission.}
\end{remark}

%
%

\vspace{-0.3cm}
\section{Numerical Results} 
For our simulations, we consider a network having the following parameters, \black{unless otherwise specified}. \black{The number of antennas per sector is set to} $M=32$. We also set $K=4$, $\lambda=\SI{1}{km^{-2}}$, \black{$h_{{\rm BS}} = \SI{55}{m}$}, $h_g=\SI{1}{m}$, $\alpha_l = 2.09$, $\alpha_n = 3.75, a = 0.6, \nu = 500 \SI{}{km^{-2}}, c = 25, \black{\vartheta=\SI{10}{dB}}, A_{l}=\SI{-41.1}{dB}, A_{n}=\SI{-32.9}{dB}, G_m=\SI{10}{dB}, G_s=\SI{-3.01}{dB}, m_n=1, m_l=3, \eta=1, \sigma^2=1$, $\theta_B = 45^{\circ}$, $\theta_t = 30^{\circ}$. 

In Fig.~\ref{inter-pwr}, we verify the accuracy of the obtained \ac{PDF} of interfering channel gain $|h_j|^2$ (Table \ref{ch5:table1}: row 4) in Theorem \ref{ch5:theorem1}. The figure shows that the derived \ac{PDF} is quite accurate when $M$ is sufficiently large as in Fig.~\ref{pwr2}, while for small $M$ in Fig.~\ref{pwr1}, it still provides a reasonable approximation.

Fig.~\ref{cov_prob_vs_theta_hd} compares the \ac{SCDP} of \black{\acp{AU} with and without MIMO beamforming to GUs}. Fig.~\ref{cov_prob_vs_theta} plots the \ac{SCDP} as a function of the \ac{SIR} threshold $\vartheta$ for the \ac{AU} and the \acp{GU}. Clearly, the achievable performance of \acp{GU} considerably outperforms that of an \ac{AU}. This is because \acp{GU} have a superior propagation environment, driven by the down-tilted BS antennas in the desired signal side, and the \ac{NLoS} interfering links. 
\black{However, Fig.~\ref{cov_prob_vs_theta_hd} also shows that the \ac{SCDP} for the \ac{AU} served by MIMO \ac{CB} significantly outperforms that of the \ac{AU} served by single-antenna \acp{GBS}. Moreover, although the \ac{ZFBF} technique outperforms our proposed CB approach, the low complexity of \ac{CB} and its  associated performance gain over traditional single-antenna \acp{GBS} make it a good candidate to serve \acp{AU}.}  
Fig.~\ref{cov_prob_vs_hd} shows the effect of \ac{AU} altitude on the AU performance, with that of \acp{GU} plotted for comparison. \black{Fig.~\ref{cov_prob_vs_hd} shows that the \ac{AU} \ac{SCDP} (for all transmission schemes) gradually increases with $h_d$ up to a maximum value due to the larger \ac{LoS} probability, before it decreases again due to the stronger \ac{LoS} interference and higher large scale fading.}

\begin{figure}[!tbp]	
    \vspace{-0.6cm}
    \centering
    \subfigure[AU altitude $h_d=\SI{30}{m}$]			
    {
        \includegraphics[width=1.62in]{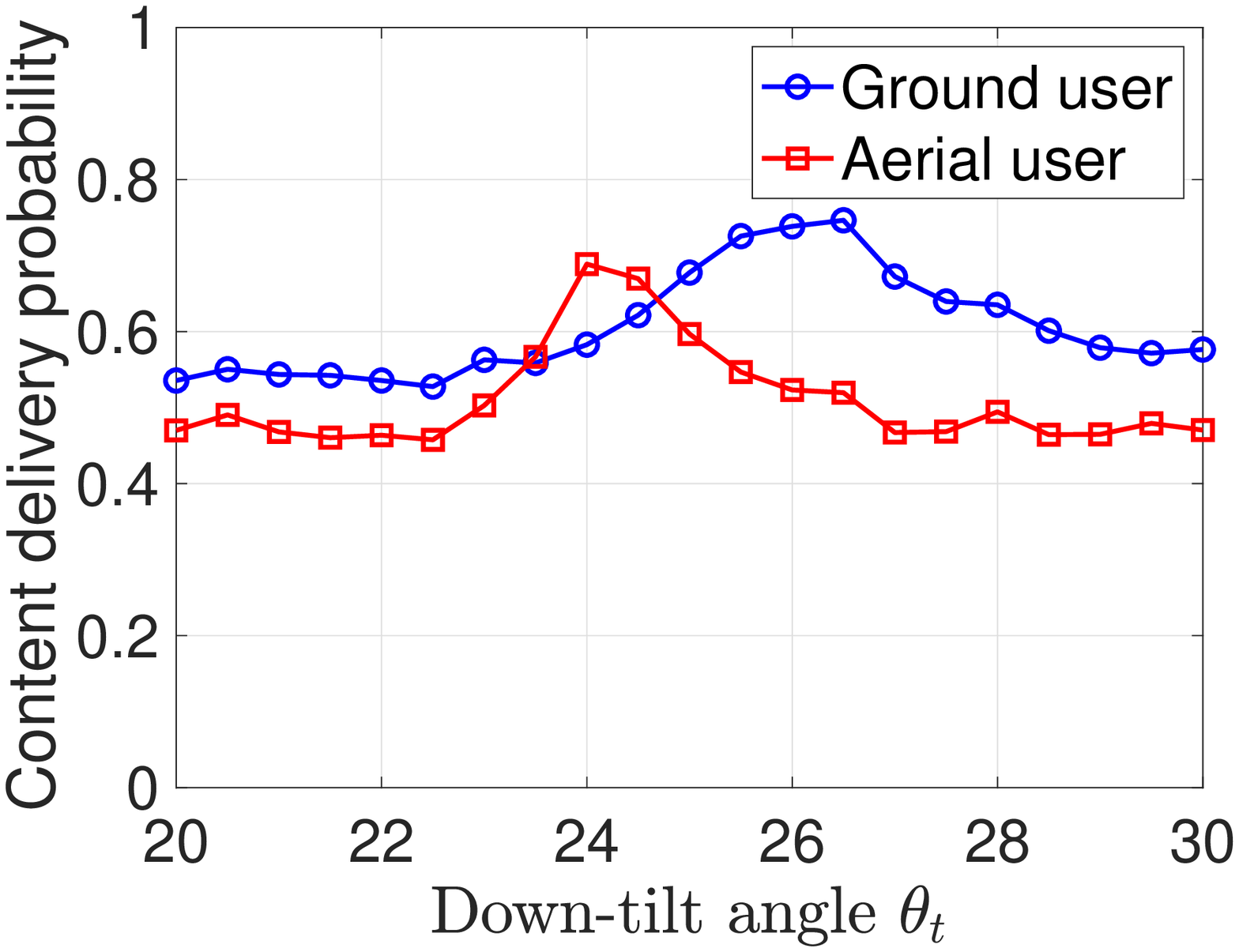}		
        \label{ch5:tilt1}
    }
    \subfigure[AU altitude  $h_d=\SI{80}{m}$]		
    {
        \includegraphics[width=1.62in]{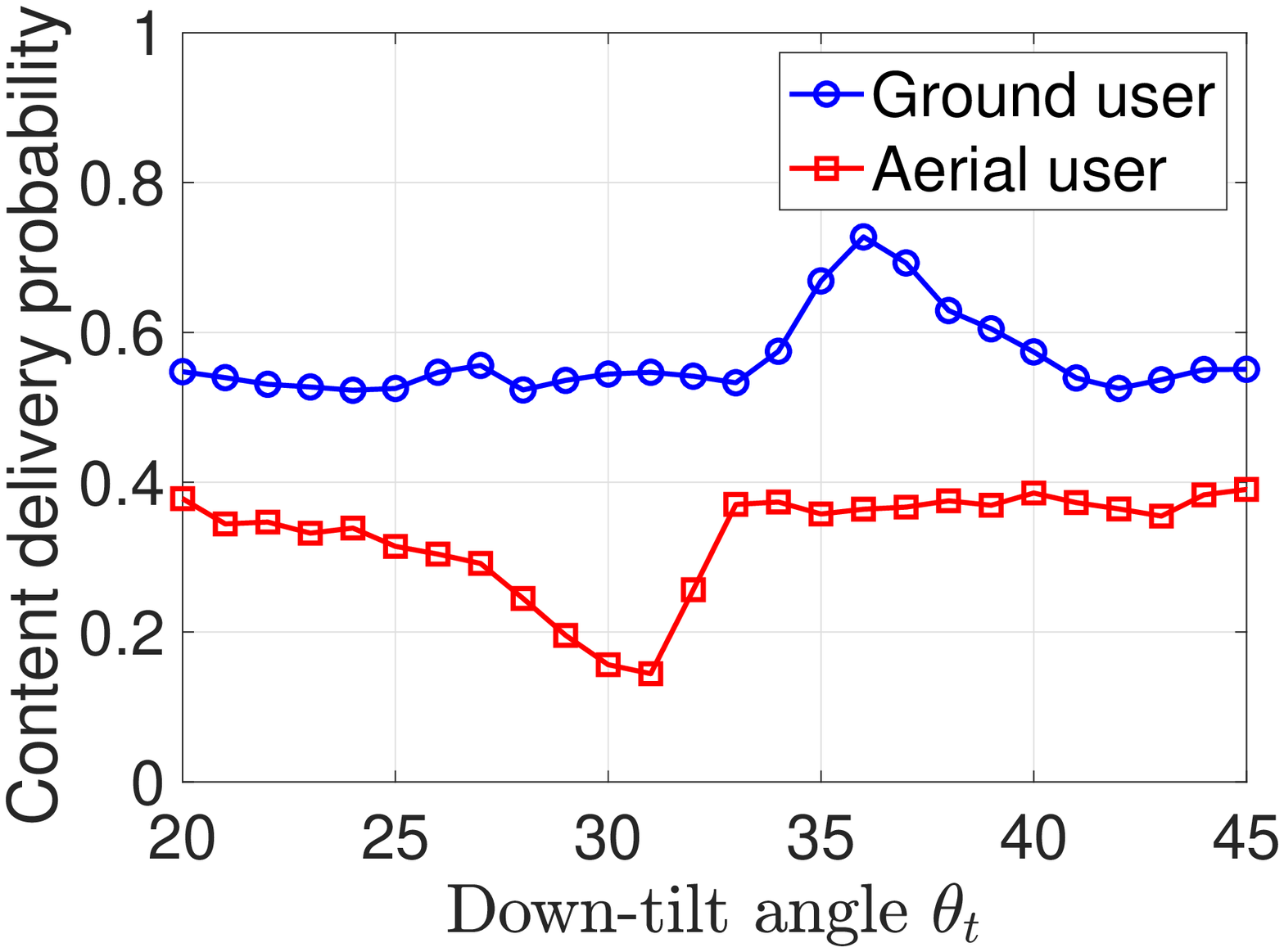}
        \label{ch5:tilt2}
    }
    \caption{Effect of antenna down-tilt angle.}   
    \label{ch5:tilt}
    \vspace{-0.4cm}
\end{figure}

Fig.~\ref{ch5:tilt} illustrates the effect of the down-tilt angle $\theta_t$ on the performance of both the \ac{AU} and the \acp{GU}, for different \ac{AU} altitudes. As illustrated in Fig.~\ref{ch5:tilt1}, for $h_d<h_{{\rm BS}}$, the performance of the \ac{AU} is maximized at certain $\theta_t$, and beyond that it starts to degrade. However, for \acp{GU}, their performance is maximized at a higher $\theta_t$. \black{Hence, adjusting the antennas' down-tilt angle yields a tradeoff between the performance of AUs and GUs owing to the difference in their altitudes}. For $h_d>h_{{\rm BS}}$ in Fig.~\ref{ch5:tilt2}, the \ac{SCDP} of the \ac{AU} first decreases with $\theta_t$ to a minimum value, and then it increases again. This finding can be explained as follows: when $\theta_t$ is small, an \ac{AU} at an altitude $h_d>h_{{\rm BS}}$ can be served from the main lobe of tagged BS while also experiencing high interference from the main lobe of other interfering BSs.  Gradually, as $\theta_t$ increases, the worst performance is observed when the \ac{AU} is no longer served from the main lobe of tagged BS antennas while still experiencing high interference from the main lobe of other BSs. Finally, for very large $\theta_t$, both intended and interfering signals stem from the side-lobes, and hence the performance is improved again. 
In Fig.~\ref{ch5:cov_vs_KM},  we show the prominent effect of the number of scheduled users $K$ and the number of antennas $M$ on the network performance. Fig.~\ref{cov_vs_K} shows that the \ac{SCDP} monotonically decreases for both AU and GU as $K$ increases due to the effect of stronger interference. However, it is noticeable that increasing $K$ highly degrades the \ac{AU} performance compared to that of \acp{GU}. This stems from the fact that \acp{AU} are more sensitive to interference, which often exhibits \ac{LoS} component.  
\black{In Fig.~\ref{SE_vs_K}, we show the system spectral efficiency (SE) versus the number of scheduled users $K$. In this figure, $K=1$ means that only the AU is scheduled. Evidently, the system SE increases as $K$ increases, which proves that spatially multiplexing one \ac{AU} with the \acp{GU} significantly improves the system SE.}
Fig.~\ref{cov_vs_M} shows that increasing the number of antennas $M$ improves the \ac{SCDP} for both users with nearly the same rate. 
 

\begin{figure}[!t]	
    \vspace{-0.5cm}
    \centering
    \subfigure[Number of antennas $M=32$]
    {
        \includegraphics[width=1.04in]{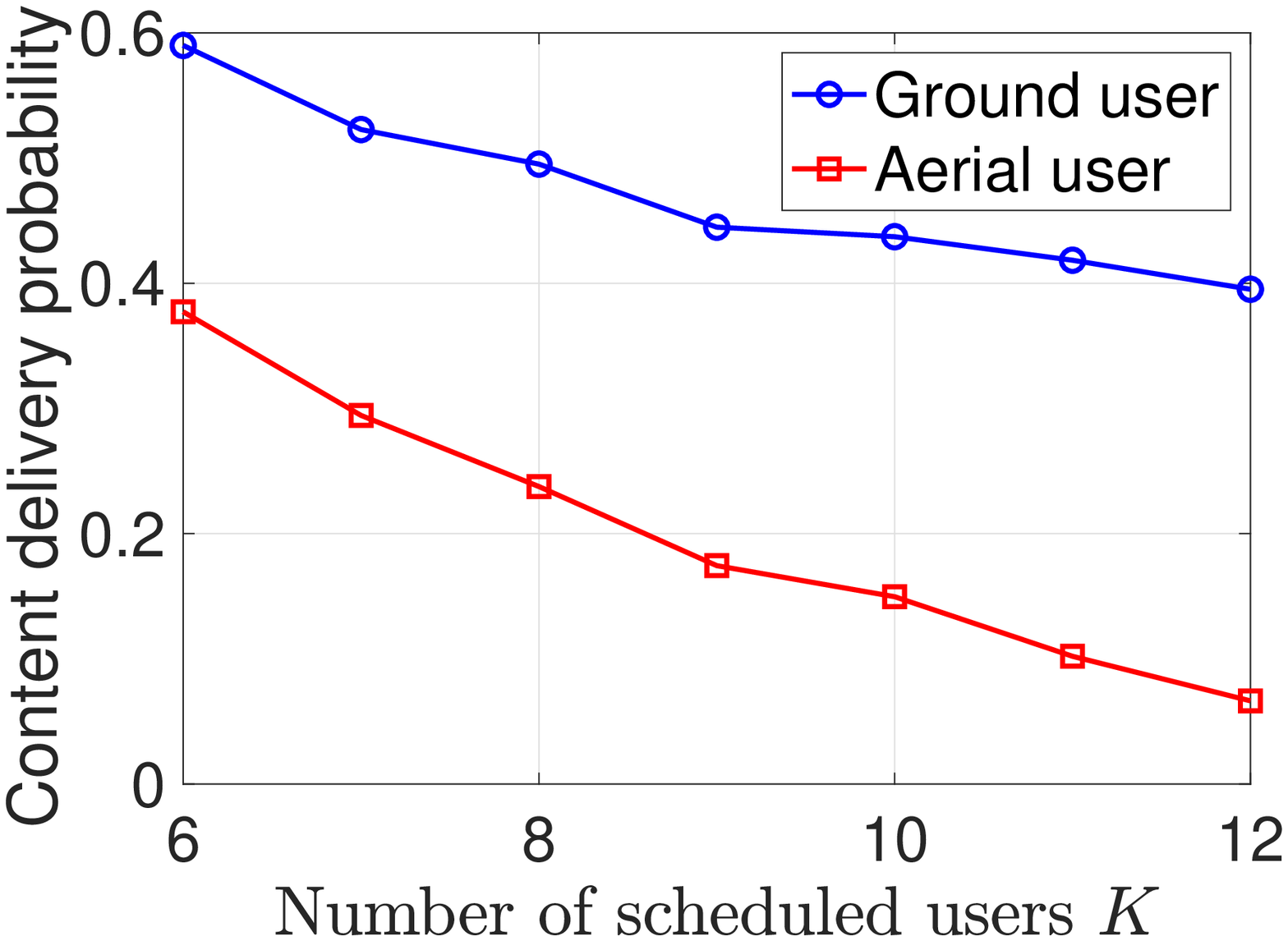}		
        \label{cov_vs_K}
    }
    \subfigure[Number of antennas $M=32$]
    {
        \includegraphics[width=1.04in]{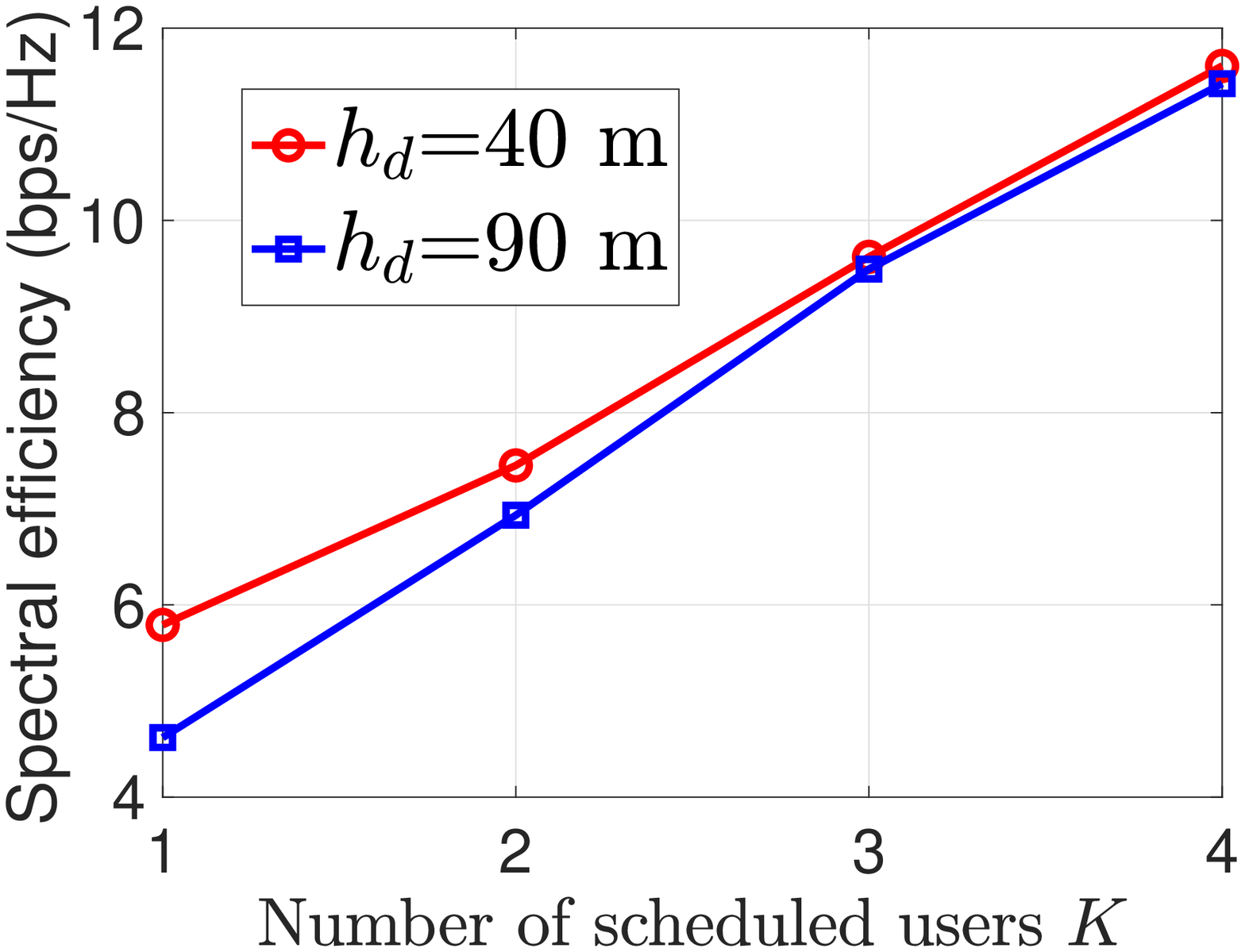}		
        \label{SE_vs_K}
    }
    \subfigure[Number of users $K=4$]
    {
        \includegraphics[width=1.04in]{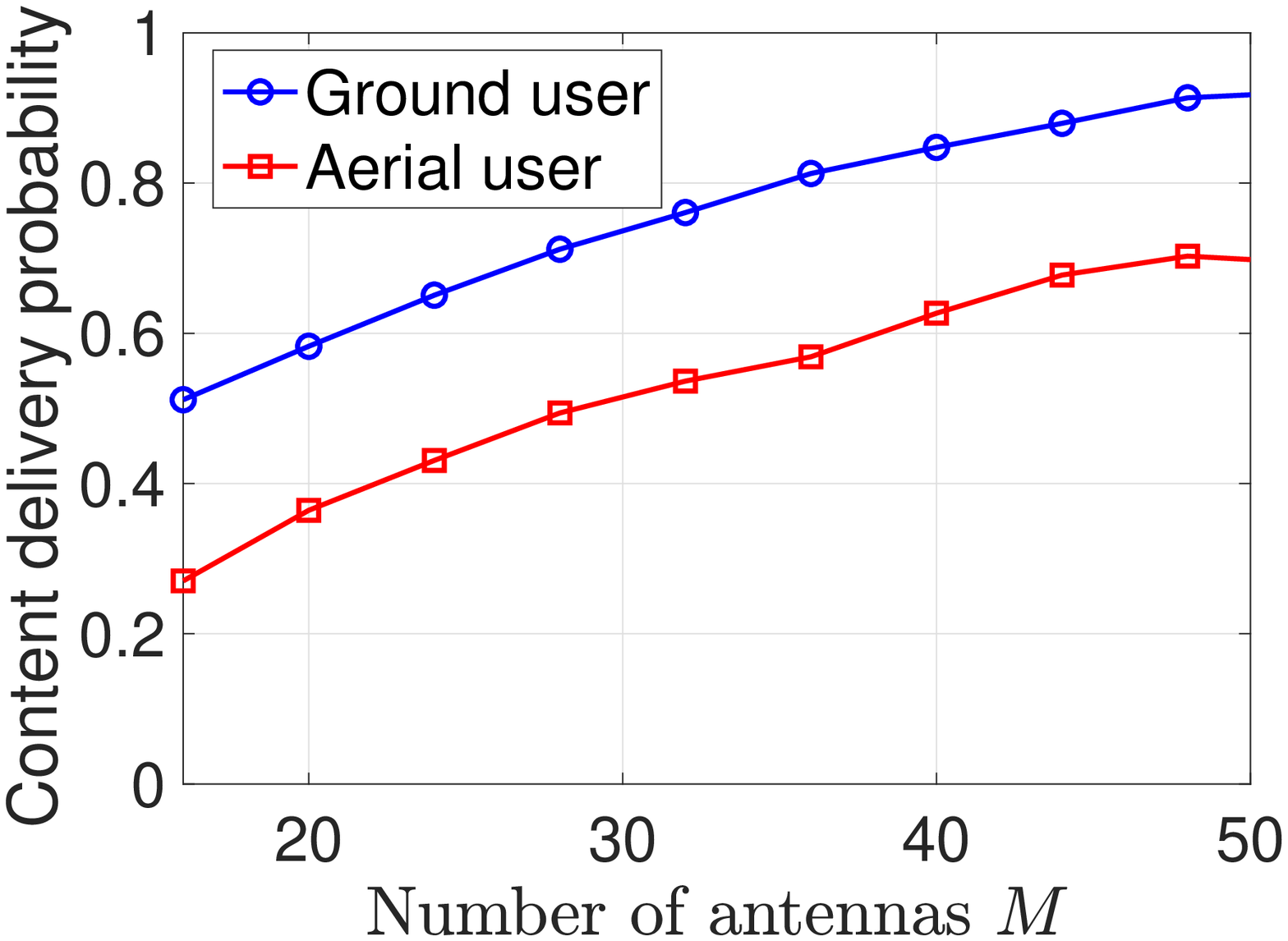}
        \label{cov_vs_M}
    }
    \caption{Effect of the number of antennas and the number of scheduled users.}		
    \label{ch5:cov_vs_KM}
    \vspace{-0.3cm}
\end{figure}

\vspace{-0.3 cm}
\section{Conclusion} 
In this paper, we have proposed a novel \ac{CB} framework for spatially multiplexing \acp{AU} and \acp{GU}. In order to analytically characterize the \ac{SCDP}, we have derived the gain of intended and interfering channels. \black{We have shown that  exploiting \ac{CB} from massive  \ac{MIMO}-enabled BSs to spatially multiplex one AU and multiple  \acp{GU} substantially improves the performance of the \ac{AU}}, in terms of \ac{SCDP}. We have then shown that the down-tilt of the BS antennas can lead to a tradeoff between the performance of AUs and GUs only if the AU's altitude is below the BS height. Simulation results have shown the various properties of cellular communications when AUs and GUs co-exist.\footnote{Creating communication protocols for secure content delivery for networks of \acp{UAV} using, e.g., blockchain technology, can be a potential subject for future investigation \cite{8662639,baza2018blockchain,baza2019b,baza2018blockchain1}.}  
\vspace{-0.3 cm} 
\begin{appendix}
 \vspace{-0.1 cm}
\black{The \ac{SCDP} is defined as the probability of downloading content with a received $\sir$ higher than a target threshold $\vartheta$, i.e.,}
\begin{align}				
\Pb_{{\rm c}|r}^v &= \Pb\Big(\frac{ \frac{P_t}{K} \zeta_v(r) \big|\boldsymbol{w}_{i1} \boldsymbol{h}_{i1} \big|^2 }{I}>\vartheta\Big) 
=\Pb\Big(\Big|\boldsymbol{w}_{i1} \boldsymbol{h}_{i1} \Big|^2>
\frac{\vartheta K}{P_t \zeta_v(r)} I\Big) 
\nonumber  \\
  \label{prob-y2}
&\overset{(a)}{=} \Eb_{I} \Bigg[ \sum_{i=0}^{M_v-1}   \frac{s_v^i }{i!}I^i
e^{-s_vI} \Bigg]
\overset{(b)}{=}  \sum_{i=0}^{M_v-1}  \frac{(-s_v)^i}{i!} \Lc_{I|r}^{(i)}(s_v), 
\end{align}					
where $I=I_{\text{in}}+I_{\text{out}}$, \black{(a) follows from $|\boldsymbol{w}_{i1} \boldsymbol{h}_{i1}|^2 \sim \Gamma(M_v,\frac{\eta}{m_v})$, and (b) follows from the Laplace transform of interference, along with the assumption of independence between the intra- and inter-cell interference.}  
Next, we derive the Laplace transform of interference $\Lc_{I|r}(s_v)$ from:
\begin{align}
& = \Eb_{I} \Big[e^{-s_v I} \Big] = 
  \mathbb{E}_{h_{iK}} e^{-s_v h_{iK}  P(r)^2} 
  \mathbb{E}_{\Phi}  \prod_{j \in \Phi^o}\mathbb{E}_{h_{jK}} e^{-s_v h_{jK}  P(u_j)^2}
  \nonumber \\
  &\overset{(a)}{=}    \Big(1 + s_v \eta P_v(r)^2 \Big)^{-(K-1)}
      e^{-2\pi  \lambda \mathbb{E}_{h_{jK}}  \int_{\nu=r}^{\infty}
    \Big(1 -       {\rm exp}\big(-s_v h_{jK} P(\nu)^2\big) \Big)\nu\dd{\nu}} 
      \nonumber \\
      &\overset{(b)}{=}    e^{-(K-1) {\rm log}(1 + s_v \eta P_v(r)^2)}
      e^{-2\pi  \lambda \mathbb{E}_{h_{jK}}  \int_{\nu=r}^{\infty}
    \Big(1 -       {\rm exp}\big(-s_v h_{jK} P(\nu)^2\big) \Big)\nu\dd{\nu}} 
      \nonumber \\
&\overset{(c)}{=}    e^{-(K-1) {\rm log}(1 + s_v \eta P_v(r)^2)}
\times 
\nonumber \\
      &e^{-2\pi  \lambda \int_{\nu=r}^{\infty}
    \big(1 -       
    \Pb_{l}(\nu)\delta_l(\nu,s_v) - \Pb_{n}(\nu)\delta_n(\nu,s_v) 
    \big) \nu\dd{\nu}}  \overset{}{=}   e^{\varpi(s_v)}, 
    \nonumber 
\end{align}
\black{where (a) follows from $h_{iK} \sim \Gamma(K-1,\eta)$ and the \ac{PGFL} of \ac{PPP} $\Phi$ \cite{haenggi2012stochastic}}. (b) follows from the fact that $x=e^{{\rm log}(x)}$, and (c) follows since $h_{jK} \sim \Gamma(K,\eta)$. 
In \cite{8490204}, it is proved that  $  \sum_{i=0}^{M_v-1}  \frac{(-s_v)^i}{i!} \Lc_{I|r}^{(i)}(s_v)  = \sum_{i=0}^{M_v-1}  p_i$, with $p_i =\frac{(-s_v)^i}{i!} \Lc_{I|r}^{(i)}(s_v)$ computed from the recursive relation:
$p_i = \sum_{l=0}^{i-1}\frac{i-l}{i} p_l t_{i-l}$, where $t_{k} =\frac{(-s_v)^{k}}{k!} \varpi^{(k)}(s_v)$. After some algebraic manipulation, $\Pb_{{\rm c}|r}^v$ can be expressed in a compact form $\Pb_{{\rm c}|r}^v = \lVert e^{\boldsymbol{T}_{M_v}}\rVert_1$ as in \cite{8490204}. \black{In summary, we first derive the conditional log-Laplace transform $\varpi(s_v)$ of the aggregate interference. Then, we calculate the $n$-th derivative of $\varpi(s_v)$ to populate the entries $t_n$ of the lower triangular Toeplitz matrix $\boldsymbol{T}_{M_v}$. The conditional \ac{SCDP} can be then computed from $\Pb_{{\rm c}|r}^v = \lVert e^{\boldsymbol{T}_{M_v}}\rVert_1$.}

\end{appendix}
\vspace{-0.09 cm}
\bibliographystyle{IEEEtran}
\bibliography{bibliography}

\begin{thebibliography}{10}
\providecommand{\url}[1]{#1}
\csname url@samestyle\endcsname
\providecommand{\newblock}{\relax}
\providecommand{\bibinfo}[2]{#2}
\providecommand{\BIBentrySTDinterwordspacing}{\spaceskip=0pt\relax}
\providecommand{\BIBentryALTinterwordstretchfactor}{4}
\providecommand{\BIBentryALTinterwordspacing}{\spaceskip=\fontdimen2\font plus
\BIBentryALTinterwordstretchfactor\fontdimen3\font minus
  \fontdimen4\font\relax}
\providecommand{\BIBforeignlanguage}[2]{{%
\expandafter\ifx\csname l@#1\endcsname\relax
\typeout{** WARNING: IEEEtran.bst: No hyphenation pattern has been}%
\typeout{** loaded for the language `#1'. Using the pattern for}%
\typeout{** the default language instead.}%
\else
\language=\csname l@#1\endcsname
\fi
#2}}
\providecommand{\BIBdecl}{\relax}
\BIBdecl

\bibitem{8660516}
M.~{Mozaffari}, W.~{Saad}, M.~{Bennis}, Y.~{Nam}, and M.~{Debbah}, ``A tutorial
  on {UAVs} for wireless networks: Applications, challenges, and open
  problems,'' \emph{IEEE Communications Surveys Tutorials}, pp. 1--1, 2019.

\bibitem{mozaffari2016unmanned}
M.~Mozaffari, W.~Saad, M.~Bennis, and M.~Debbah, ``Unmanned aerial vehicle with
  underlaid device-to-device communications: Performance and tradeoffs,''
  \emph{IEEE Transactions on Wireless Communications}, vol.~15, no.~6, pp.
  3949--3963, June 2016.

\bibitem{eldosouky2019drones}
A.~Eldosouky, A.~Ferdowsi, and W.~Saad, ``Drones in distress: A game-theoretic
  countermeasure for protecting {UAVs} against gps spoofing,'' \emph{arXiv
  preprint arXiv:1904.11568}, 2019.

\bibitem{TUAVmag2019}
M.~A. Kishk, A.~Bader, and M.-S. Alouini, ``Capacity and coverage enhancement
  using long-endurance tethered airborne base stations,'' 2019, available
  online: arxiv.org/abs/1906.11559.

\bibitem{azari2017coexistence}
M.~M. Azari, F.~Rosas, A.~Chiumento, and S.~Pollin, ``Coexistence of
  terrestrial and aerial users in cellular networks,'' in \emph{Proc. of IEEE
  Globecom Workshops (GC Wkshps)}, Singapore, Dec 2017, pp. 1--6.

\bibitem{lin2018sky}
X.~Lin, V.~Yajnanarayana, S.~D. Muruganathan, S.~Gao, H.~Asplund, H.-L.
  Maattanen, M.~Bergstrom, S.~Euler, and Y.-P.~E. Wang, ``The sky is not the
  limit: {LTE} for unmanned aerial vehicles,'' \emph{IEEE Communications
  Magazine}, vol.~56, no.~4, pp. 204--210, April 2018.

\bibitem{85317111}
S.~{Zhang}, Y.~{Zeng}, and R.~{Zhang}, ``Cellular-enabled {UAV} communication:
  A connectivity-constrained trajectory optimization perspective,'' \emph{IEEE
  Transactions on Communications}, vol.~67, no.~3, pp. 2580--2604, March 2019.

\bibitem{8528463}
G.~Geraci, A.~Garcia-Rodriguez, L.~G. Giordano, D.~L{\'o}pez-P{\'e}rez, and
  E.~Bj{\"o}rnson, ``Understanding {UAV} cellular communications: From existing
  networks to massive {MIMO},'' \emph{IEEE Access}, vol.~6, pp.
  67\,853--67\,865, 2018.

\bibitem{amer2020caching}
R.~Amer, W.~Saad, H.~ElSawy, M.~Butt, and N.~Marchetti, ``Caching to the sky:
  Performance analysis of cache-assisted {CoMP} for cellular-connected
  {UAVs},'' in \emph{Proc. of the IEEE Wireless Communications and Networking
  Conference ({WCNC})}, Marrakech, Morocco, April. 2019.

\bibitem{comp-meet-uavs}
R.~Amer \emph{et~al.}, ``Mobility in the sky: Performance and mobility analysis
  for cellular-connected unmanned aerial vehicles ({UAVs}),'' in \emph{Work in
  progress}, July 2019.

\bibitem{8713514}
B.~{Galkin}, J.~{Kibilda}, and L.~{Da Silva}, ``A stochastic model for {UAV}
  networks positioned above demand hotspots in urban environments,'' \emph{IEEE
  Transactions on Vehicular Technology}, pp. 1--1, 2019.

\bibitem{7967745}
V.~V. {Chetlur} and H.~S. {Dhillon}, ``Downlink coverage analysis for a finite
  {3-D} wireless network of unmanned aerial vehicles,'' \emph{IEEE Transactions
  on Communications}, vol.~65, no.~10, pp. 4543--4558, Oct 2017.

\bibitem{bhagavatula2008performance}
R.~Bhagavatula, R.~W. Heath~Jr, and K.~Linehan, ``Performance evaluation of
  {MIMO} base station antenna designs,'' \emph{Antenna Systems and Technology
  Magazine}, vol.~11, no.~6, pp. 14--17, 2008.

\bibitem{6823643}
K.~Hosseini, W.~Yu, and R.~S. Adve, ``Large-scale {MIMO} versus network {MIMO}
  for multicell interference mitigation,'' \emph{IEEE Journal of Selected
  Topics in Signal Processing}, vol.~8, no.~5, pp. 930--941, Oct 2014.

\bibitem{8412262}
R.~{Amer}, M.~M. {Butt}, M.~{Bennis}, and N.~{Marchetti}, ``Inter-cluster
  cooperation for wireless {D2D} caching networks,'' \emph{IEEE Transactions on
  Wireless Communications}, vol.~17, no.~9, pp. 6108--6121, Sep. 2018.

\bibitem{chaccour2019reliability}
C.~Chaccour, R.~Amer, B.~Zhou, and W.~Saad, ``On the reliability of wireless
  virtual reality at terahertz ({THz}) frequencies,'' \emph{arXiv preprint
  arXiv:1905.07656}, 2019.

\bibitem{haenggi2012stochastic}
M.~Haenggi, \emph{\text{Stochastic geometry for wireless networks}}.\hskip 1em
  plus 0.5em minus 0.4em\relax \textit{Cambridge University Press}, 2012.

\bibitem{8490204}
X.~Yu, C.~Li, J.~Zhang, M.~Haenggi, and K.~B. Letaief, ``A unified framework
  for the tractable analysis of multi-antenna wireless networks,'' \emph{IEEE
  Transactions on Wireless Communications}, vol.~17, no.~12, pp. 7965--7980,
  Dec 2018.

\bibitem{8662639}
Y.~{Zhu}, G.~{Zheng}, and K.~{Wong}, ``Blockchain-empowered decentralized
  storage in air-to-ground industrial networks,'' \emph{IEEE Transactions on
  Industrial Informatics}, vol.~15, no.~6, pp. 3593--3601, June 2019.

\bibitem{baza2018blockchain}
M.~Baza, M.~Nabil, N.~Lasla, K.~Fidan, M.~Mahmoud, and M.~Abdallah,
  ``Blockchain-based firmware update scheme tailored for autonomous vehicles,''
  in \emph{Proc. of the IEEE Wireless Communications and Networking Conference
  ({WCNC})}, Marrakech, Morocco, April. 2019.

\bibitem{baza2019b}
M.~Baza, N.~Lasla, M.~Mahmoud, and M.~Abdallah, ``B-ride: Ride sharing with
  privacy-preservation, trust and fair payment atop public blockchain,''
  \emph{arXiv preprint arXiv:1906.09968}, 2019.

\bibitem{baza2018blockchain1}
M.~Baza, M.~Nabil, M.~Ismail, M.~Mahmoud, E.~Serpedin, and M.~Rahman,
  ``Blockchain-based charging coordination mechanism for smart grid energy
  storage units,'' \emph{arXiv preprint arXiv:1811.02001}, 2018.

\end{thebibliography}
\end{document}